\documentclass[12pt]{amsart}
\usepackage{amsfonts}

\usepackage[letterpaper]{geometry}
\usepackage{amsmath}
\usepackage{mathtools}
\usepackage{commath}

\usepackage{amssymb}
\setlength{\topmargin}{0pt}
\setlength{\headheight}{28pt}

\setlength{\oddsidemargin}{0pt}
\setlength{\evensidemargin}{0pt}
\setlength{\headsep}{18pt}

\def\N{\mathbb{N}}

\def\bo{\{0,1\}^n}

\newtheorem{thm}{\bf Theorem}[section]
\newtheorem{lemma}[thm]{\bf Lemma}
\newtheorem{prop}[thm]{\bf Proposition}

\theoremstyle{definition}

\newtheorem*{rem}{\bf Remark}

\begin{document}

\title[On a Communication Complexity problem in Combinatorial Number Theory]{On a Communication Complexity problem in Combinatorial Number Theory}

\author[]{Bence Bakos}
\address{Bence Bakos, ELTE TTK,
E\"otv\"os University, Institute of Mathematics, H-1117
P\'{a}zm\'{a}ny st. 1/c, Budapest, Hungary}
 \email{bakosbence237@gmail.com}

\author[]{Norbert Hegyv\'ari}
\address{Norbert Hegyv\'{a}ri, ELTE TTK,
E\"otv\"os University, Institute of Mathematics, H-1117
P\'{a}zm\'{a}ny st. 1/c, Budapest, Hungary and Alfr\'ed R\'enyi Institute of Mathematics, H-1364 Budapest, P.O.Box 127.}
 \email{hegyvari@renyi.hu}

\author[]{M\'at\'e P\'alfy}
\address{M\'at\'e P\'alfy, ELTE TTK,
E\"otv\"os University, Institute of Mathematics, H-1117
P\'{a}zm\'{a}ny st. 1/c, Budapest, Hungary}
 \email{palfymateandras@gmail.com}

\large

\maketitle


\begin{abstract}

The original knapsack problem is well known to be NP-complete. In a multidimensional version one have to decide whether a $p\in \N^k$ is in a sumset-sum of a set $X \subseteq \N^k$ or not. In this paper we are going to investigate a communication complexity problem related to this. Throughout our examination we are also going to prove some results about the special case of the multidimensional knapsack problem, when the set $X$ is in the form $X=A_1 \times \dots \times A_k \subseteq \N^k$, where $A_i$ are so called \textit{regular} sets.

AMS 2010 Primary 11B30, 11B39, Secondary 11B75

Keywords: Subset sums, communication complexity, matching in bipartite graph 
\end{abstract}

\maketitle

\section{Introduction}

In the last decades there were several interplay between computer sciences  and additive combinatorics. One of the most interesting examples is a connection between some notions in computer sciences and the {\it Gowers norm} (see e.g. [8], [9]).
Another interesting example is an additive communication complexity problem which was supported by an example of Behrend on the maximal density of a set without a three-term arithmetic progression (see e.g. [7]) and additive problem related to decision question (see [6]). 

In our paper we investigate a communication complexity problem which is related to a field in combinatorial number theory; namely to the topic of  subset-sums (see e.g. [5] and for restricted set addition e.g. [4]).

The original form of the knapsack or subset-sum problem is to determine, given positive integers $a_1,a_2,\dots,a_n$ and an integer $m$, whether there is a subset of the set $\{a_j\}$ that sums up to $m$. It is well known as an NP-complete problem. Certainly this question can be extend to higher dimension too. For any $X\subseteq \N^k$ let
\begin{equation}\label{1}
FS(X):=\{\sum_{i=1}^\infty\varepsilon_ix_i: \ x_i\in X, \ \varepsilon_i \in \{0,1\}, \ \sum_{i=1}^\infty\varepsilon_i<\infty\}\end{equation}
and we have to decide for $p\in \N$ whether $p\in FS(X)$ or not.

The structure of $FS(X)$ in higher dimension was investigated in $[1],[2]$ and $[3]$ when the sets $A_i$ are powers of integers.

Surprisingly if we have $X=A_1\times A_2\times \dots \times A_k\subseteq \N^k$, where for any $i=1,2,\dots,k$,  $FS(A_i)=\N$ then $FS(X)$ does not cover necessary the whole $\N^k$. For example if $X=\{2^k\}_{k=0}^\infty\times \{2^m\}_{m=0}^\infty$ then $(15,1)\not \in FS(X)$, while $(15,1+256) \in FS(X)$.

Our communication complexity problem can be described as follows: there are $k$ players, each having a sequence $A_i$ with following properties: for every $i=1,2,\dots,k$
$$
(i) \ 1\in A_i; \quad (ii) \ A_i\setminus \{1\}\subseteq A_i+A_i; \quad (iii) \ a_{j+1}>\varrho a_j
$$
for some  $\varrho\in (1,2]\ j\in \N; \ a_j\in A_i$. 

For sets $A,B\subset \N$ the sum is defined by $A+ B:=\{a\pm b:a\in A; \ b\in B\}$, and throughout the paper $\log_2 N$ will denote the logarithm in base 2.

The set $Y\subseteq \N$ is said to be \textit{complete} if $FS(Y)=\N$. For any $z\in \N^r$ and $X\subseteq \N^r$ let us denote by $r(z)$ the representations of $z$ from $X$, i.e.
$r(z):=r_X(z)=\{(\varepsilon = \{\varepsilon_i\}_{i=1}^\infty) : z=\sum_{i=1}^\infty \varepsilon_ix_i, \ x_i\in X, \ \sum_{i=1}^\infty \varepsilon_i < \infty\}$. Note that in this form it is allowed to use an element of $X$ more than once. So just because $r_X(z)$ is non-empty, it does not mean that $z \in FS(X)$. 

We are going to look at shortest representations of $n \in \N$ for different sets. For this we will use the notation $rank_Y(p):= \min_{\varepsilon \in r_Y(p)}(\sum_i \varepsilon_i)$, i.e. the length of the shortest representation of $p$ from the set Y. If the set Y is obvious from the context, then we leave the Y from the index and simply write $rank(p)$. Denote by $mult_Y(p,\varepsilon)$ the maximal multiplicity of an element in the representation $\varepsilon$ from $r_Y(p)$. For example if $p=a_1+a_2+a_2+a_3+a_3+a_3$ ($\varepsilon_1=1, \varepsilon_2=2, \varepsilon_3=3$) then $mult_A(p,\varepsilon)=3$ (sometimes we will leave the $A$ from the index, just like in the case of the rank).
A sequence $A$ is said to be {\it regular} if all numbers $n$ have a shortest representation which has multiplicity equal to 1. Formally for all $n\in N$ when we assume that $rank_A(n)=k$ then $\ min_{\varepsilon \in r^k_A(n)}(mult_A(n,\varepsilon))=1$, where $r^k_A(n)=\{ (\varepsilon \in r_A(n) | \sum_i \varepsilon_i = k \}$.

Denote $[n]:=\{1,2,\dots n\}$ and let $B_r(N)$ be the set $B_r(N)=\{(x_1,x_2,\dots, x_r)\in \N^r: \ 1\leq x_i \leq N; \ i=1,2,\dots ,r\}$. When $r=2$ we write shortly $B(N)$.


\smallskip

In the next Lemma we will show that  sequences which fulfill conditions (i) and (ii) are complete. 
\begin{lemma}\label{1.1}
Let $Y\subseteq \N$ and assume that $1\in Y$ and $Y\setminus \{1\}\subseteq Y+Y$. Then $Y$ is complete. Moreover if $Y=\{1=y_1<y_2<\dots\}$ then for every $i=1,2,\dots$ we have $y_i\leq 2^{i-1}$.
\end{lemma}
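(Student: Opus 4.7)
The plan is to reduce both statements to the single recursive inequality $y_{i+1}\le 2y_i$ and then to finish by the classical argument that partial sums fill an initial interval.

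First I would establish $y_{i+1}\le 2y_i$ for every $i\ge 1$. Since $y_{i+1}\ne 1$, the hypothesis gives $y_{i+1}=y_j+y_k$ for some $y_j,y_k\in Y$; positivity forces $y_j,y_k<y_{i+1}$, so both lie in $\{y_1,\dots,y_i\}$ and each is at most $y_i$. (I read $Y+Y$ here as the ordinary sumset, which seems to be the intended meaning---the $\pm$ convention stated in the introduction would make the conclusion false, as the example $Y=\{1,100\}$ shows.) A routine induction together with $y_1=1$ then promotes this to $y_i\le 2^{i-1}$, settling the ``moreover'' clause.

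For completeness, put $S_i:=y_1+\cdots+y_i$ with $S_0:=0$. The core inductive invariant I would carry is
\[
y_i\le S_{i-1}+1\quad\text{for all }i\ge 1,
\]
proved by induction from $y_{i+1}\le 2y_i=y_i+y_i\le y_i+(S_{i-1}+1)=S_i+1$. A second induction then yields $\{1,2,\dots,S_i\}\subseteq FS(\{y_1,\dots,y_i\})$: given $n\in(S_{i-1},S_i]$, I would write $n=y_i+(n-y_i)$ with $n-y_i\in[0,S_{i-1}]$; if $n-y_i=0$ take $n=y_i$, otherwise the previous step of the induction produces a representation of $n-y_i$ from $\{y_1,\dots,y_{i-1}\}$. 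Since $S_i\to\infty$, this yields $FS(Y)=\N$.

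The only real obstacle is bookkeeping: the two linked inductions (on $y_i\le S_{i-1}+1$ and on $[1,S_i]\subseteq FS(\{y_1,\dots,y_i\})$) must be kept in lock-step so that every invocation of the inductive hypothesis rests on already-established facts. The essential algebraic step is the single inequality $y_{i+1}\le 2y_i$; everything beyond that is the standard ``subset sums cover an initial interval'' argument for sequences with $y_{i+1}\le S_i+1$.
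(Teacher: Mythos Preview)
Your proof is correct and follows essentially the same route as the paper: both arguments first derive the key inequality $y_{i+1}\le 2y_i$ from the hypothesis $Y\setminus\{1\}\subseteq Y+Y$, then deduce $y_i\le 2^{i-1}$, and finish with the standard ``subset sums cover an initial interval'' induction. The only cosmetic difference is the choice of invariant: the paper carries $[2y_k-1]\subseteq FS(\{y_1,\dots,y_k\})$ directly, whereas you carry $[S_i]\subseteq FS(\{y_1,\dots,y_i\})$ via the auxiliary bound $y_i\le S_{i-1}+1$; both are equivalent bookkeeping for the same argument. Your parenthetical remark about the $\pm$ convention is well taken---the intended reading is indeed the ordinary sumset.
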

\begin{proof}[Proof of Lemma~\ref{1.1}]

Write $Y=\{1=y_1<y_2<\dots <y_n<\dots \}$. Since $Y\subseteq Y+Y$ we have that $1,2\in FS(Y)$ and for every $n>1$, $y_n=y_i+y_j$. Clearly $i,j\leq n-1$, so we have that $y_n=y_i+y_j\leq 2y_{n-1}$.


From this point the proof is well-known; for the sake of completeness we include the rest of the argument. We claim that for every $k\in \N$: $[2a_k-1]\subseteq FS(Y)$, and each element of $[2a_k-1]$ is in the form $ \{ \sum\limits_{i=1}^k \varepsilon_i a_i \mid \varepsilon \in \{0,1\} \}$ . It is true for $k=1$ and assume it is true for $k\geq 2$. Since $a_{k+1} \leq 2a_k$, hence all positive integers are in $FS(Y)$ up to $a_{k+1}+(2a_k -1)\geq  2a_{k+1}-1$. So obviously $[2a_{k+1}-1]\subseteq FS(Y)$.

\end{proof}


Note that many 'classical' sequences which fulfill conditions above (e.g. the sequence of two powers, the Fibonacci sequence) are also regular.

\begin{lemma}
The sequences of two powers and the Fibonacci numbers ($F_1=1, \ F_2=2, \dots$) are regular and fulfills conditions (i)-(iii).
\end{lemma}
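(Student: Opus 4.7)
The plan is to treat the two sequences separately and, in each case, first verify the axiomatic conditions (i)--(iii) and then prove regularity. Conditions (i)--(iii) are routine: for powers of two, $1 = 2^0$, $2^k = 2^{k-1}+2^{k-1}$ for $k \geq 1$, and $2^{j+1} = 2\cdot 2^j$, so any $\varrho \in (1,2)$ satisfies (iii); for the Fibonacci numbers $F_1 = 1$, $F_2 = 2$, the relations $F_2 = F_1 + F_1$ and $F_k = F_{k-1}+F_{k-2}$ ($k \geq 3$) give (ii), and since $\min_j F_{j+1}/F_j = F_3/F_2 = 3/2$, any $\varrho \in (1, 3/2)$ works for (iii).

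For the powers of two, regularity is immediate. If a representation $n = \sum_k c_k\, 2^k$ has $c_{k_0} \geq 2$, the substitution $2 \cdot 2^{k_0} \to 2^{k_0+1}$ strictly decreases the total number of summands. Iterating, one reaches the unique representation with all $c_k \in \{0,1\}$, namely the binary expansion, which is therefore shortest and has multiplicity $1$.

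For the Fibonacci sequence the main step is to fix $n$ and pick, among all shortest representations of $n$, one $\varepsilon^*$ that additionally minimises the weight $\Phi(\varepsilon) = \sum_i c_i\, i$; I will argue that $\varepsilon^*$, possibly after one extra swap, has every multiplicity at most $1$. The identities $2F_1 = F_2$ and $3F_2 = F_4 + F_1$ strictly shorten, so in $\varepsilon^*$ one has $c_1 \leq 1$ and $c_2 \leq 2$. The identity $2F_i = F_{i+1} + F_{i-2}$ for $i \geq 3$ preserves length but lowers $\Phi$ by $1$, so the $\Phi$-minimality of $\varepsilon^*$ forces $c_i \leq 1$ for every $i \geq 3$. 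The only remaining case is $c_2 = 2$; here one first rules out $c_1 = 1$ and $c_3 = 1$ in $\varepsilon^*$ using the length-reducing identities $2F_2 + F_1 = F_4$ and $2F_2 + F_3 = F_4 + F_2$, and then performs the length-preserving swap $2F_2 \to F_3 + F_1$. The resulting representation is still shortest, has $c_1 = c_3 = 1$, $c_2 = 0$, leaves all other coordinates unchanged, and so has every multiplicity at most $1$.

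The delicate point is this final $c_2 = 2$ subcase: the chosen weight $\Phi$ does not strictly drop under the swap $2F_2 \to F_3 + F_1$, so a plain minimisation of $\Phi$ is not enough. One must first clear $c_1$ and $c_3$ by the two auxiliary shortening identities above before performing the swap, in order to ensure that applying it does not create any new multiplicities.
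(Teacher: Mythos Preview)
Your proof is correct and takes essentially the same approach as the paper: both hinge on the identity $2F_i = F_{i+1} + F_{i-2}$ (for $i \geq 3$) and the fact that it lowers the index-sum $\sum_i c_i\, i$ by one, with separate ad hoc handling of the $F_2$ duplication. You package the argument as selecting a $\Phi$-minimiser among shortest representations rather than running an iterative replacement process as the paper does, but the underlying monovariant and identities are the same (your treatment of the residual $c_2=2$ case via the auxiliary shortening identities $2F_2+F_1=F_4$ and $2F_2+F_3=F_4+F_2$ is slightly more explicit than the paper's one-line remark that the $F_2$ case can occur at most once).
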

\begin{proof}
That the sequences of two powers and the Fibonacci numbers fulfills conditions (i)-(iii) is clear.

The regularity assumption for the shortest representation obviously holds for $\{2^k\}_{k=0}^{\infty}$, since the unique one is such a representation.

We show that it also holds for the Fibonacci numbers. Assume, that a shortest representation of $n = \sum_{k=1}^n F_{i_k}$ contains $F_{i_s}$ twice. If $F_{i_s}=1$ then by replacing $2 F_{i_s}$ with $F_2=2$ we get a shorter representation which is not possible. If $F_{i_s}=2$, then we replace $2F_{i_s}$ with $F_1+F_3$. Otherwise we have that $F_{i_s}+F_{i_s}=F_{i_s-2}+F_{i_s+1}$. We continue this process (replacing duplications the way described above), until there is no duplication. Since we can only have one $F_1$ in the current representation at any time (otherwise we could get a shorter one) the case when $F_2$ is duplicated will occur at most once. So with at most one exception the sum of the indexes decreases by one at every step of the process. So eventually it will stop and in the end we always acquire a representation with $mult_A(\varepsilon,n)=1$ which is not longer than the one we started with.
\end{proof}

\begin{rem}
Let us note that there are sequences which ensure conditions $(i)-(ii)$ but they are not regular.

Let $A=\{1,2,3,5,6,12,\dots \}$. This sequence fulfils conditions $(i)$ and $(ii)$. Here $10=2+3+5\in FS(A)$ (an it is the shortest representation with multiplicity=1), but $10=5+5$ is another (shorter) representation. 
\end{rem}



Now we turn onto the communication problem. In the next theorem we will use number-in-hand multiparty communication model, i.e. there are $k$ players $P_1,P_2,\dots P_k$ and a $k$-argument functions $F: (\bo)^k\mapsto \{0,1\}$. For every $i\in [k]$ $P_i$ gets an $n$-bit input. In the communication process we will use {\it blackboard model} where every message sent by a player is written down on a blackboard which is visible for all players.

The communication complexity of
this model, denoted by $CC^{(k)}(F)$, is the least number of bits needed to be communicated to compute $F$ correctly.

Assume that we have $k$ players and we assign a regular sequence $A_i$ to each of them. For a given point $p=(p_1,p_2,\dots, p_k)\in \N^k; \ p_i\leq N;  \ (i=1,2,\dots,k)$, the $i^{th}$ players knows (just) $p_i$ and his previously given set $A_i$. Let $X:=A_1 \times \dots \times A_k$. With minimal communications they have to decide whether $p\in FS(X)$ or not. Denote by $F$ the function which describes this.

\smallskip

We will prove in Section 2 that:

\begin{thm}\label{The}
Let  $X=A_1\times A_2\times \dots \times A_k\subseteq \N^k$, where for every $i=1,2,\dots k$ $A_i$ is regular and $(i)$, $(ii)$ and $(iii)$ hold. Then
$$
CC^{(k)}(F)<k\log_2\Big(\frac{\log_2N}{\log_2\varrho}\Big)+k.
$$
\end{thm}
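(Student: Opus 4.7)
The plan is to reduce the question $p=(p_1,\dots,p_k)\in FS(X)$ to the single arithmetic inequality
\[
\max_{1\le i\le k} R_i \;\le\; \min_{1\le i\le k} p_i, \qquad R_i:=rank_{A_i}(p_i),
\]
and then to let each player simply broadcast $R_i$. Once this characterisation is in hand, condition $(iii)$ will bound each $R_i$ and show that a single $R_i$ fits into strictly fewer than $\log_2(\log_2 N/\log_2\varrho)+1$ bits; summing over the $k$ players gives the claimed total.

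The forward direction is the easy one. If $p\in FS(X)$ is realised by pairwise distinct vectors $v^{(1)},\dots,v^{(L)}\in X$ summing to $p$, then the $i$-th coordinate projection exhibits $p_i$ as a length-$L$ sum of elements of $A_i$ (with repetitions allowed), so $R_i\le L$; since every term is at least $1$ we also have $L\le p_i$. Hence $\max_i R_i\le L\le\min_i p_i$.

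The converse is the main work. Assuming $\max_i R_i\le\min_i p_i$, I set $L:=R^{*}=\max_i R_i$ and fix $i^{*}$ with $R_{i^{*}}=R^{*}$. By regularity of $A_{i^{*}}$ I pick a shortest representation $p_{i^{*}}=a_1^{(i^{*})}+\dots+a_L^{(i^{*})}$ whose terms are pairwise distinct. For each $i\ne i^{*}$, I start with a shortest representation of $p_i$ (of length $R_i\le L$) and lengthen it up to $L$ by iterated splittings: pick a term $a>1$ in the current representation, use (ii) to write $a=b+c$ with $b,c\in A_i$, and replace $a$ by the pair $b,c$. Because the current length $\ell$ is always $\le L-1\le p_i-1<p_i$, not all terms can equal $1$, so some term $a>1$ always exists, and after exactly $L-R_i$ splittings one reaches length $L$. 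Re-bundling the $k$ length-$L$ representations coordinatewise produces $L$ vectors in $X$ whose sum is $p$; they are pairwise distinct because their $i^{*}$-th coordinates are. I expect this step to be the main obstacle, because one must combine two different hypotheses in two different roles: regularity is used only in the coordinate realising the maximum rank (to provide vector-distinctness), while (ii) is used in the remaining coordinates to synchronise their representation lengths with $L$.

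For the final bookkeeping I invoke (iii). Iterating $a_{j+1}>\varrho a_j$ from $a_1=1$ gives $a_j>\varrho^{j-1}$, so the number of $a_j\in A_i$ with $a_j\le N$ is strictly less than $1+\log_\varrho N=1+\log_2 N/\log_2\varrho$. Because $A_i$ is regular, $R_i$ is the length of a representation of $p_i\le N$ using distinct elements of $A_i$, each of which lies in that finite initial segment; hence $R_i<1+\log_2 N/\log_2\varrho$, and $R_i$ can be broadcast in strictly fewer than $\log_2(\log_2 N/\log_2\varrho)+1$ bits. After the $k$ players write $R_1,\dots,R_k$ on the blackboard, $F$ is determined by comparing $R^{*}=\max_i R_i$ with each locally held $p_i$, and the total number of bits communicated is strictly less than $k\log_2(\log_2 N/\log_2\varrho)+k$, as claimed.
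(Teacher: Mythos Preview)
Your characterisation $p\in FS(X)\Longleftrightarrow\max_i R_i\le\min_i p_i$ and its proof are correct and coincide with the paper's approach: the forward direction and the splitting/regularity construction for the converse are exactly the content of the paper's Lemma~2.1 (stated there for $k=2$) together with the projection argument in the theorem's proof.

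There is, however, a genuine gap in the protocol. After the players have written $R_1,\dots,R_k$ on the blackboard, each $P_i$ can compute $R^*=\max_j R_j$ and test locally whether $p_i\ge R^*$, but $P_i$ has no way of knowing whether $p_j\ge R^*$ holds for $j\ne i$; hence no player (and no referee reading only the blackboard) is in a position to output $F(p)$. The paper's protocol therefore contains a second round in which each $P_i$ writes one further bit, namely $1$ if $p_i\ge R^*$ and $0$ otherwise, and declares $F(p)=1$ iff no $0$ appears. In the paper's accounting it is precisely this second round that supplies the ``$+k$'' in the bound, while the first round costs $k\log_2(\log_2 N/\log_2\varrho)$. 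In your version the extra ``$+1$'' per player has already been spent on encoding $R_i$, so no budget remains for communicating the local comparisons. The repair is straightforward---add the $k$ comparison bits and correspondingly tighten the encoding of each $R_i$ to at most $\log_2(\log_2 N/\log_2\varrho)$ bits, as the paper does---but as written your protocol does not actually compute $F$.
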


\bigskip

In the rest of the paper we will investigate additive structure of special $X$ sets. The next proposition we show that except a region with zero density all lattice points are in $FS(\{2^m\}\times\{2^k\})_{m,k\in \N}$. We say that $E:=\{(a,b)\in \N^2: \ b\leq \log_2 a\}\cup \{(a,b)\in \N^2: \ a\leq \log_2 b\}$ is the exceptional set.  We prove the following two propositions in Section 3:

\begin{prop}\label{prop}
Let $X=\{2^m\}\times\{2^k\}_{m,k\in \N}$ and let $E$ be the exceptional set. Then

1. $(\N^2\setminus E)\subseteq FS(X)$.

Furthermore

2. For every $D\in \N$ there exists a square $S_D:=\{(s_1,s_2): \ x_0\leq s_1\leq x_0+D; y_0\leq s_2\leq x_0+D \ \}\subseteq E$ (for some $x_0,y_0\in \N$) such that $FS(X)\cap S_D=\emptyset$.
\end{prop}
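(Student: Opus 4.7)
For part 1, my plan is an explicit construction. Take $(a,b)\notin E$, so $a>\log_2 b$ and $b>\log_2 a$. Write $s_2(n)$ for the digit sum of $n$ in base $2$, and by the symmetry of $X$ assume $s_2(a)\le s_2(b)=:s$. The binary expansion $b=2^{k_1}+\cdots+2^{k_s}$ then uses $s$ pairwise distinct exponents. The hypothesis $a>\log_2 b$ combined with integrality of $a$ gives $a\ge \lfloor\log_2 b\rfloor+1\ge s_2(b)=s$. Hence I can write $a$ as a sum of exactly $s$ powers of $2$ with repetitions allowed: the minimum attainable count is $s_2(a)\le s$ and the maximum is $a$ (take $a$ copies of $2^0$), and any intermediate count is realised by repeatedly replacing a summand $2^m$ with $m\ge 1$ by $2^{m-1}+2^{m-1}$. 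Writing $a=2^{m_1}+\cdots+2^{m_s}$ and pairing coordinates yields
$$(a,b)=\sum_{j=1}^{s}(2^{m_j},2^{k_j}),$$
and because the $k_j$'s are pairwise distinct, the pairs $(m_j,k_j)$ are automatically distinct, so $(a,b)\in FS(X)$.

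For part 2 my plan is a counting/gap argument in a horizontal strip. Fix $y_0=1$ and consider $y\in\{1,2,\dots,D+1\}$. Any representation $(x,y)=\sum_{j=1}^{r}(2^{m_j},2^{k_j})\in FS(X)$ must have $r\le y\le D+1$, since every summand contributes at least $1$ to the $y$-coordinate. Restricting $x\in[N,2N]$ forces $m_j\le\lceil\log_2(2N)\rceil$, so the first coordinate is determined by a multiset of size at most $D+1$ drawn from $\{0,1,\dots,\lceil\log_2(2N)\rceil\}$. The number of such multisets is $O_D((\log N)^{D+1})$, so the ``forbidden'' set $B\subseteq[N,2N]$ of first coordinates of points in $FS(X)\cap(\N\times\{1,\dots,D+1\})$ has size $O_D((\log N)^{D+1})$. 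Take $N$ large (and also $N\ge 2^{D+2}$, so that $\log_2 x\ge D+1$ throughout $[N,2N]$ and the candidate square lies in $E$): then the number $\lfloor N/(D+1)\rfloor$ of length-$(D+1)$ blocks in $[N,2N]$ exceeds $|B|$, and pigeonhole produces a block $[x_0,x_0+D]$ disjoint from $B$. The square $S_D=[x_0,x_0+D]\times[1,1+D]$ then lies in $E$ and avoids $FS(X)$.

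The main technical obstacle is the crude upper bound on $|B|$ in part 2; the argument succeeds because the small second coordinate forces $r\le D+1$, so the admissible first coordinates form a set of size polynomial in $\log N$, while the ambient interval has linear length. Part 1 is essentially bookkeeping: the key trick is that distinctness of the pairs $(m_j,k_j)$ comes for free as soon as one of the two coordinates is represented with pairwise distinct exponents, which lets us exploit the flexibility to split powers of $2$ in the other coordinate.
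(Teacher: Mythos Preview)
Your argument for Part~1 is correct and essentially the same as the paper's: both proofs rest on the observation that if one coordinate is written with pairwise distinct exponents, then distinctness of the pairs is automatic, and the other coordinate can be split to match lengths. The paper packages this via Lemma~\ref{2.1} (the rank criterion $p_2\ge rank(p_1)$ and $p_1\ge rank(p_2)$), whereas you carry out the splitting explicitly; the content is identical.

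For Part~2 your proof is correct but takes a genuinely different route. The paper gives an \emph{explicit} square: it sets $y_0=1$ and $x_0=\sum_{i=D+1}^{2D}2^i$, so that every $x_0+k$ with $1\le k\le D$ already has binary digit sum exceeding $D$, and Lemma~\ref{2.1} immediately rules out $(x_0+k,j)$ for $j\le D$. Your argument is instead a density/pigeonhole count: the constraint $y\le D+1$ forces at most $D+1$ summands, so the admissible first coordinates in $[N,2N]$ number only $O_D((\log N)^{D+1})$, and some block of length $D+1$ must avoid them. The paper's construction is shorter and pinpoints $x_0$ exactly, which is useful if one cares about effective bounds; your approach is more robust (it would adapt to other sequences where an explicit high-rank block is less obvious) but is non-constructive and needs $N$ large depending on $D$.
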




Nevertheless the set $E$ is not "empty". It contains "many" lattice points from $FS(X)$:

\begin{prop}\label{1.3}
For every $M\in \N$ there exists a square $S_M:=\{(t_1,t_2): \ z_0\leq t_1\leq z_0+M; w_0\leq t_2\leq w_0+M \ \}\subseteq E$ (for some $z_0,w_0\in \N$) such that $$|FS(X)\cap S_M|\geq \frac{1}{4}M\log_2M.$$
\end{prop}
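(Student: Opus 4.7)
The plan is to place the square deep inside $E$, near the $s_1$-axis, and then to produce many points of $FS(X)$ in it by a single popcount-preserving recipe. I will take $L\ge M+1$ and set $z_0:=2^L$, $w_0:=1$, so that $S_M=[2^L,\,2^L+M]\times[1,\,M+1]$. Every point $(s_1,s_2)\in S_M$ then satisfies $s_2\le M+1\le L=\log_2 z_0\le \log_2 s_1$, which places $S_M$ inside $E$.

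For each $r\in\{0,1,\ldots,M\}$ with binary expansion $r=2^{a_1}+\cdots+2^{a_p}$ and each integer $k\ge 0$ satisfying $k\notin\{a_1,\ldots,a_p\}$ and $r+2^k\le M+1$, I would form the point
\[
(s_1,s_2):=(2^L+r,\ r+2^k)=(2^L,2^k)+\sum_{i=1}^{p}(2^{a_i},2^{a_i}).
\]
The $p+1$ summands on the right are pairwise distinct elements of $X$: the pairs $(2^{a_i},2^{a_i})$ are distinct because the $a_i$ are, and $(2^L,2^k)$ differs from each of them since $L\ge M+1>a_i$. Hence $(s_1,s_2)\in FS(X)\cap S_M$. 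Moreover $(r,k)\mapsto(s_1,s_2)$ is injective, because $r=s_1-2^L$ and $2^k=s_2-r$ can be read off from the point.

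What remains is to count. Set $B:=\lfloor \log_2(M+1)\rfloor$, so $2^B\le M+1<2^{B+1}$. Restricting to $r\in\{0,\ldots,2^B-1\}$ and $k\in\{0,\ldots,B-1\}$ automatically ensures $r+2^k\le 2^B-1\le M+1$, and for each such $r$ there are exactly $B-\sigma(r)$ admissible values of $k$, where $\sigma(r)$ denotes the number of $1$'s in the binary expansion of $r$. The standard identity $\sum_{r=0}^{2^B-1}\sigma(r)=2^{B-1}B$ then produces $2^{B-1}B$ distinct points from this step alone. Supplementing with $k=B$, which is admissible whenever $r\le M+1-2^B$, contributes a further $M+2-2^B$ points. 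Combining these contributions and using $2^{B-1}>M/4$ (from $M<2^{B+1}$) together with $\log_2 M<B+1$, the total
\[
2^{B-1}B+(M+2-2^B)\;=\;2^{B-1}(B-2)+M+2
\]
is readily seen to exceed $\tfrac{1}{4}M\log_2 M$. I expect the only delicate step to be verifying this last inequality carefully in the boundary case when $M+1$ is not a power of $2$; the role of the $k=B$ supplement is precisely to close the otherwise $O(M)$-sized deficit that arises there.
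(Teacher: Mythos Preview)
Your argument is correct, and it is genuinely different from the paper's. Both proofs place the square low near the $t_1$-axis inside $E$, but the constructions of $FS(X)$-points are distinct. The paper takes $z_0=2^{2^{R+1}}$ (with $2^R\le M<2^{R+1}$) and uses \emph{horizontal} sums
\[
(n,\,s(n)\cdot 2^f)=\sum_{j=1}^{s(n)}(2^{i_j},2^f),
\]
one for each $n$ in the strip and each $f$ with $s(n)2^f\le 2^R$; the count becomes $\sum_{k}\binom{R}{k-1}(\lfloor R-\log_2 k\rfloor+1)$, which is then bounded below via $\log_2(R+1)\le R/2$ (valid only for $R\ge 6$). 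Your construction instead uses a single off-diagonal summand together with diagonal ones,
\[
(2^L+r,\,r+2^k)=(2^L,2^k)+\sum_i(2^{a_i},2^{a_i}),
\]
and the counting reduces immediately to the popcount identity $\sum_{r<2^B}\sigma(r)=B\,2^{B-1}$, giving $2^{B-1}B$ points without any asymptotic restriction, with the $k=B$ supplement handling the shortfall when $M+1$ is far from $2^B$. Your approach also needs only $z_0=2^{M+1}$ rather than the doubly exponential $2^{2^{R+1}}$.

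The one place you leave informal---the final inequality $2^{B-1}(B-2)+M+2\ge \tfrac14 M\log_2 M$---does go through: for $B\ge 3$ use $2^{B-1}>M/4$ and $B+1>\log_2 M$ to get $2^{B-1}(B-2)+M>\tfrac{M}{4}(B+2)>\tfrac14 M\log_2 M$; for $B\le 2$ (i.e.\ $M\le 7$) the count $2^{B-1}B+(M+2-2^B)$ is easily checked directly. It would strengthen the write-up to include this short verification rather than deferring it.
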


\smallskip



\section{Proof of Theorem \ref{The}}

First we prove this simple but crucial lemma:

\begin{lemma}\label{2.1}
Assume that $A_1,A_2\subseteq \N$ are complete and regular sequences, and assume that $(p_1,p_2)\in FS(A_1)\times  FS(A_2)$ for some positive integers $p_1\geq p_2$. Then $(p_1,p_2)\in FS(A_1\times A_2)$ if and only if $p_2\geq rank(p_1)$ and $p_1\geq rank(p_2)$.
\end{lemma}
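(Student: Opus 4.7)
My plan is to unpack what it means for $(p_1,p_2)$ to lie in $FS(A_1\times A_2)$ and handle the two implications separately. By definition, $(p_1,p_2)\in FS(A_1\times A_2)$ precisely when there are distinct pairs $(c_1,d_1),\dots,(c_k,d_k)\in A_1\times A_2$ whose componentwise sum is $(p_1,p_2)$. Viewing the coordinates separately, $p_1=c_1+\dots+c_k$ and $p_2=d_1+\dots+d_k$ are representations of length $k$ over $A_1$ and $A_2$ respectively; the $c_i$'s or the $d_i$'s may repeat individually, but the pairs are distinct.

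For the forward direction, such a $k$-pair decomposition immediately yields $rank(p_1)\le k$ and $rank(p_2)\le k$. Since each $d_i\ge 1$, I also get $p_2=\sum d_i\ge k\ge rank(p_1)$, and symmetrically $p_1\ge rank(p_2)$, which is exactly the claimed inequality.

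For the converse, I set $k=\max(rank(p_1),rank(p_2))$ and, by the symmetry of the hypotheses, assume without loss of generality that $k=rank(p_2)$. The hypothesis then gives $p_1\ge rank(p_2)=k$. Using regularity of $A_2$, I fix a shortest representation $p_2=b_1+\dots+b_k$ with the $b_j$'s pairwise distinct. The crux is to produce a representation $p_1=c_1+\dots+c_k$ over $A_1$ of length exactly $k$; once this is done, the pairs $(c_i,b_i)$ are automatically distinct (distinctness of the $b_j$'s is inherited by the pairs) and they sum to $(p_1,p_2)$, exhibiting the required element of $FS(A_1\times A_2)$. To build this length-$k$ representation of $p_1$, I start from a shortest representation of $p_1$ (length $rank(p_1)\le k$) and lengthen it one element at a time via condition (ii): whenever some term $a>1$ appears, split it as $a=a'+a''$ with $a',a''\in A_1$. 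Each split increases the length by exactly one, and as long as the current length is strictly less than $p_1$ (equivalently, the multiset is not entirely $1$'s) some term $a>1$ is still available to split. Hence every integer length between $rank(p_1)$ and $p_1$ is reachable, and since $rank(p_1)\le k\le p_1$, I can stop precisely at length $k$.

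The main obstacle is the controlled lengthening of the $A_1$-representation: one must verify both that each split adds exactly one term (so no integer length is skipped) and that a split is always available while the length is still below the target $k$ (which is why the inequality $k\le p_1$ from the hypothesis is essential). The rest is bookkeeping — regularity of $A_2$ is used only to secure the distinct starting $b_j$'s, while the $c_i$'s on the $A_1$ side are allowed to repeat without spoiling the distinctness of the final pairs.
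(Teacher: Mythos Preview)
Your proof is correct and follows essentially the same approach as the paper's: in both cases the forward direction uses that a decomposition into $k$ pairs forces each coordinate to be at least $k$, and the converse fixes a shortest \emph{distinct} representation on the side with the larger rank (via regularity) and then uses the splitting process from condition~(ii) to lengthen the other coordinate's representation one step at a time until the lengths match. The only cosmetic difference is that you take $k=rank(p_2)$ and lengthen $p_1$, whereas the paper takes the symmetric convention $rank(p_1)\ge rank(p_2)$ and lengthens $p_2$.
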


\begin{proof}[Proof of Lemma \ref{2.1}]

First note that one of $p_2\geq rank(p_1)$ and $p_1\geq rank(p_2)$ is always satfified, because, for example, if $rank(p_1) \geq rank(p_2)$ then $p_1 \geq rank(p_1) \geq rank(p_2)$. So from now on, without the loss of generality, we assume $rank(p_1) \geq rank(p_2)$.

($\Leftarrow$): Let the representation of $p_1$ in $FS(A_1)$  and $p_2$ in $FS(A_2)$ be $p_1=x_{t_1}+x_{t_2}+\dots +x_{t_r}$ and  $p_2=y_{i_1}+y_{i_2}+\dots +y_{i_k}$ respectively.
Since every $y_{i_j}\in A_2 \setminus \{1\} \subseteq A_2+A_2$, we can split $y_{i_j}$ to the sum of two earlier element. Going on this process step by step we increase the number of terms by 1. This process can be continued until all terms in the sum are 1. Since $p_2\geq rank(p_1)$, there will be a step, where the number of (multi)terms is equal to $rank(p_1)$, i.e. $p_2=y_{j_1}+y_{j_2}+\dots +y_{j_s}$, $y_{j_1}\leq y_{j_2}\leq \dots \leq y_{j_s}$ and  where $s=rank(p_1)$. Now the points $(x_{t_i},y_{j_i})$, $1\leq i\leq s$ are pairwise different, hence
$$
(p_1,p_2)=\sum_{i=1}^s(x_{t_i},y_{j_i})\in FS(A_1\times A_2)
$$
as we stated.

($\Rightarrow$) Assume that $p_2<rank(p_1)$. We know that every representation of $p_2$ is not longer than $p_2$ and every representation of $p_1$ is not shorter than $rank(p_1)$. This means, that there are no representations of $p_1$ and $p_2$, where the elemenst of the two sums could be paired. Thus $(p_1, p_2) \notin FS(A_1 \times A_2)$.

\end{proof}

\medskip

{\it The protocol and the Calculation of $CC^{(k)}(F)$}:

\smallskip

\begin{proof}[Proof of Theorem 1.3.]
The point $p=(p_1,p_2,\dots, p_k)\in \N^k; \ p_i\leq N;  \ (i=1,2,\dots,k)$ is given. Recall that $P_i$ knows only $p_i$. 
Now $P_i$ can write $p_i$ in a shortest representation from $A_i$ as $p_i=a_{i_1}+a_{i_2}+\dots +a_{i_{t_i}}\in FS(A_i)$. Recall that since all $A_1,A_2,\dots ,A_k$ are regular hence they can do it in a way that for every $i$ these representations have $mult_{A_i}(p_i, \varepsilon)=1$.  Then $P_i$  sends $rank(p_i)$ to the list (to the blackboard). 

All player $P_i$ rearranges the list $\{rank(p_i)\}_{i=1}^k$, and finds the maximal $\max_i rank(p_i)$. Then player $P_i$ sends the bit $1$ if $p_i\geq \max_i rank(p_i)$ and $0$ otherwise. Clearly if the blackboard does not contain $0$ then the point $p=(p_1,p_2,\dots, p_k)$ lies in $FS(X)$ (the same argument can be applied as in the proof of Lemma 2.1.). 

Now we have to show that otherwise $p$ is not representable. 

Indeed, if there is a $0$ in the blackboard then there are two players $P_i$ and $P_j$ where the length of all representation of $p_j$ is longer than $p_i$. By Lemma \ref{2.1} taking the projection to the plane $i,j$ of our sets $X$ we get that the point $(p_i,p_j)$ is not representable.

\smallskip

{\it Calculation of $CC^{(k)}(F)$}: We give an upper bound to $CC^{(k)}(F)$. There are $k$ players, so it is enough to bound length of individual message of any player. 

Represent $p_j$ as $p_j=a_{i_1}+a_{i_2}+\dots +a_{i_{t_j}}$. Now by $(iii)$  $a_{i_{t_j}}\geq \varrho^{i_{t_j}}$ and since $a_{i_{t_j}}\leq p_j\leq N$ we get $t_j \leq i_{t_j} \leq \frac{\log_2N}{\log_2\varrho}$.
So the binary length of the message is at most
$
\log_2\Big(\frac{\log_2N}{\log_2\varrho}\Big)
$.
Finally the total number of bits is at most
$$
k\log_2\Big(\frac{\log_2N}{\log_2\varrho}\Big)+k.
$$
\end{proof}

\section{On the additive structure of $FS(\{2^m\}\times\{2^k\})_{m,k\in \N}$}

We devote this section to prove Proposition 1.4. and Proposition \ref{1.3}.

\begin{proof}[Proof of Proposition~\ref{prop}]

1. Since the set $X$ is reflected to the line $y=x$ so we will show that all $(a,b)\in \N^2\setminus E$, $b\leq a$ lies in $FS(X)$. Denote by $s(a)$ and $s(b)$ the number of terms in the dyadic expansion of $a$ and $b$ respectively. If $s(b)>s(a)$ then $s(b) < b \leq a$ and by Lemma \ref{2.1} we are done. 

Assume that  $s(b)<s(a)$. Clearly $s(a)\leq \log_2 a\leq b$, so we can use Lemma \ref{2.1} again to finish the proof.

2. First let us note that the shortest binary representation (including repetition as well) is the unique one (indeed if there are two equal terms, say $2^i-2^i$ then one can replace it by $2^{i+1}$ shortening the number of terms).

Let $D\in \N$ and for the square $S_D$ we define first the bottom-left corner $(x_0,y_0)$ of $S_D$.
Let $y_0:=1$ and $x_0:=\sum_{i=D+1}^{2D}2^i$. Now by Lemma \ref{2.1} we have that $(x_0+k,j) \notin FS(X)$ for every $1\leq k, j\leq D$ (because $rank(x_0+k) > D \geq j$), as we stated.
\end{proof}

\smallskip

\begin{proof}[Proof of Proposition~\ref{1.3}]  

Let us choose $R$ such that $2^R\leq M< 2^{R+1}$ holds. We simply define the square to be:
$$
S_M:=\{(t_1,t_2)\in \N^2: \ 2^{2^{R+1}}\leq t_1\leq 2^{2^{R+1}}+M; 0\leq t_2 \leq M  \}
$$
It is easy to see that $S_M \subset E$. We will only deal with the following subrectangle of $S_M$, because it will have still "enough" elements from $FS(X)$:
$$
S_{2^R-1,2^R}:=\{(t_1,t_2)\in \N^2: \ 2^{2^{R+1}}\leq t_1\leq 2^{2^{R+1}}+2^R-1; 0\leq t_2\leq 2^R \}
$$
Take any element from $FS(X)$ which is a sum of horizontal elements, i.e. $(n,m)=(2^{i_1},2^f)+(2^{i_2},2^f)+\dots +(2^{i_k},2^f)$, here $k=s(n)$ and denote this set by $Z$, more formally $Z:=\{(n,s(n)2^f): f \in \mathbb{N} \}$. We will show that $Z \cap S_{2^R-1,2^R}$ still has a lot of elements. By the choice of $S_{2^R-1}$ we have that for  every element $(n,m) \in S_{2^R-1,2^R}$: $s(n) \in \{1, \dots R+1 \}$, moreover, we have that for a fixed $k \in \{1, \dots R+1 \}$:
$$
\mid \{ n:  2^{2^{R+1}}\leq n \leq 2^{2^{R+1}}+2^R-1 \text{ and } s(n) = k \} \mid = \binom{R}{k-1}. 
$$

So if we return to the set $Z$, to ensure that an $(n,m)=(n,s(n)2^f) \in Z$ is in $S_{2^R-1,2^R}$ we need that $s(n)2^f \leq 2^R$. So $f$ should be the element of the following set: $\{0,1, \dots , \lfloor R-\log_2 s(n) \rfloor \} $. Hence we get that the number of intersections between the sets $Z$ and $S_{2^R-1,2^R}$ is at least: 
$$\sum\limits_{k=1}^{R+1} \binom{R}{k-1} (\lfloor R-\log_2 k \rfloor +1 )  \geq \sum\limits_{k=0}^{R} \binom{R}{k} ( R-\log_2 (k+1)  )$$

Now if we use that $\frac{R}{2} \geq \log_2(R+1)$, if $R \geq 6$, we can continue the estimation:

$$\sum\limits_{k=0}^{R} \binom{R}{k} ( R-\log_2 (R+1) ) \geq \sum\limits_{k=0}^{R} \binom{R}{k} \frac{R}{2}= 2^R \frac{2R}{4} \geq \frac{M}{2} \frac{\log_2 M}{2}.$$

\end{proof}

\section{On non-regular case}

Let $X=A_1\times A_2\subseteq \N^2$. As we have seen there are sequences, where the rank of an element of $FS(A_i)$ ($i=1,2$) is not the length of the shortest representation. It is not obvious that which conditions ensure that a given point $(p_1,p_2)$ is an element of $FS(X)$ or not. Clearly it is necessary to have (multi)partitions of $p_1=y_{t_1}+y_{t_2}+\dots +y_{t_r}$ and  $p_2=y_{i_1}+y_{i_2}+\dots +y_{i_k}$ for which $r=k$. But it is not sufficient.

Let $A_1=A_2=\{1,2,3,6,12,13,26,52,\dots,2^t13,\dots\}$ and  consider $(3,50)\in \N^2$. We have $3=1+1+1=1+2$ furthermore $50=26+12+12$, and it is easy to check that the number of terms in any other representation of $50$ is at least four, i.e. there is no pairwise different matching of the elements although there is a representation where the number of terms are the same. So $(3,50)\not \in FS(X)$.

Nevertheless if $X=A_1\times A_2$, where $A_1$ and $A_2$ fulfill conditions $(i),(ii)$ and $(iii)$, we will show an additional condition which is enough.


\begin{prop}
Let  $A_1$ and $A_2$ fulfill conditions $(i),(ii)$ and $(iii)$ and let $X=A_1\times A_2$. Let $p_1,p_2\in \N$ with $rank_{A_1}(p_1)=rank(p_1)$ and $rank_{A_2}(p_2)=rank(p_2)$ (for simplicity), and let $\varepsilon_1$ and $\varepsilon_2$ be shortest representations of $p_1$ and $p_2$. Let $K:=\max\{mult(p_1, \varepsilon_1), mult(p_2, \varepsilon_2), |rank(p_1)-rank(p_2)| \}$ and $L:=\min\{rank(p_1),rank(p_2)\}$.
If there are $\varepsilon_1$ and $\varepsilon_2$ for $p_1$ and $p_2$ such that, $K\leq \sqrt{L}/2$, then $(p_1,p_2)\in FS(X)$.
\end{prop}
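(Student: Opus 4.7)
The plan is to mimic the argument of Lemma~\ref{2.1}, handling the multiplicity obstruction by casting the required decomposition as a $0/1$-transportation problem and invoking the Gale--Ryser theorem. Without loss of generality assume $r_1:=rank(p_1)\geq rank(p_2)=L$, set $d=r_1-L\leq K$ and $n=r_1$, and fix the shortest representations $\varepsilon_1,\varepsilon_2$ provided by the hypothesis.

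First I would extend $\varepsilon_2$ to length $n$: using condition $(ii)$, iteratively replace some term $b\geq 2$ of the current multiset by two summands $b',b''\in A_2$ with $b'+b''=b$, and repeat this $d$ times to obtain a representation $\varepsilon_2'$ of $p_2$ of length $n$. A splittable term is always available, because at most $K$ of the terms of $\varepsilon_2$ can equal $1$, so $p_2\geq 2L-K\geq L+K\geq n$ (using $K\leq L/2$), which ensures that not all current terms have been reduced to $1$ before we reach the target length. A single split raises the multiplicity of $b'$ and $b''$ by at most one each, so the multiplicity of $\varepsilon_2'$ satisfies
\[K':=mult(\varepsilon_2')\leq K+2d\leq 3K.\]

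Next, define $\alpha_a:=$ multiplicity of $a\in A_1$ in $\varepsilon_1$ and $\beta_b:=$ multiplicity of $b\in A_2$ in $\varepsilon_2'$, so $\alpha_a\leq K$, $\beta_b\leq K'$ and $\sum_a \alpha_a=\sum_b \beta_b=n$. Producing $(p_1,p_2)\in FS(A_1\times A_2)$ is equivalent to exhibiting a $0/1$ matrix $(m_{a,b})$ with row sums $(\alpha_a)$ and column sums $(\beta_b)$, because its support is then a set of $n$ pairwise distinct elements of $A_1\times A_2$ whose coordinatewise sum is $(p_1,p_2)$. Existence of such a matrix is governed by the Gale--Ryser inequality
\[\sum_{i=1}^{k}\alpha^{\downarrow}_i\leq \sum_{b}\min(\beta_b,k)\qquad (k\geq 1),\]
which I would verify by splitting into $k\geq K'$ (where both sides equal $n$) and $k<K'$, in which range the elementary bound $\min(\beta_b,k)\geq \beta_b\cdot k/K'$ yields $\sum_b\min(\beta_b,k)\geq nk/K'$, while $\sum_{i=1}^{k}\alpha^{\downarrow}_i\leq kK$. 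Hence a sufficient condition is $KK'\leq n$, and the hypothesis $K\leq \sqrt{L}/2$ gives $KK'\leq 3K^2\leq 3L/4\leq n$.

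The main obstacle is the quantitative balance between the two steps: the extension inevitably inflates the multiplicity on the $\varepsilon_2$ side by a constant factor, and the Gale--Ryser verification then forces $KK'$---quadratic in $K$---to stay below $n\geq L$, which is precisely why the hypothesis has the shape $K\leq \sqrt{L}/2$. Any weakening would require a sharper analysis, either of the extension (to keep $K'$ genuinely close to $K$) or of the matching (exploiting Gale--Ryser beyond the crude two-case estimate used above).
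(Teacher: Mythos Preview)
Your argument is correct and follows the same overall arc as the paper's proof --- extend $\varepsilon_2$ by repeated splitting (via condition~(ii)) until both representations have the common length $n=rank(p_1)$, then exhibit a system of $n$ pairwise distinct pairs with the right coordinate multisets --- but you carry out the matching step by a genuinely different device. The paper quotes a bespoke bipartite-matching lemma (Lemma~\ref{3.5}, taken from~[1]): if the distinct values on each side are grouped into blocks of size at most $\sqrt{n}$, then a perfect matching with no two edges in the same block-pair exists. You instead recast the problem as the existence of a $0/1$ matrix with prescribed margins $(\alpha_a)$, $(\beta_b)$ and invoke Gale--Ryser, verifying the dominance inequalities via the crude estimate $\sum_b\min(\beta_b,k)\geq nk/K'$.

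What this buys you is robustness in the bookkeeping: your post-splitting multiplicity bound $K'\leq K+2d\leq 3K$ is the honest worst case (a split $b=b'+b''$ with $b'=b''$ can raise a multiplicity by~$2$), and it feeds directly into the sufficient condition $KK'\leq n$, which the hypothesis $K\leq\sqrt{L}/2$ delivers with room to spare. The paper's route, by contrast, needs $K'\leq\sqrt{n}$ to meet the hypothesis of Lemma~\ref{3.5}; this is asserted there but would require a slightly more careful splitting argument than the one you (or the paper) give. In either approach the quadratic shape $K\cdot K'\lesssim L$ of the constraint is what forces the hypothesis $K\leq\sqrt{L}/2$, exactly as you note in your closing remark.
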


\begin{proof}
We start the proof with an important lemma which gives a sufficiently condition for the pairwise different matching of the co-ordinates:

\begin{lemma}[{\bf [1, Proposition 1]}]\label{3.5} Let $X_1, \dots , X_s$ be disjoint
 finite sets and $Y_1, \dots , Y_t$ be disjoint
finite sets too. Let

$$U=\bigcup_{i=1}^s X_i,\quad V= \bigcup_{j=1}^tY_j,$$
with $|U|=|V|$ and suppose that
$1\le |X_i|\le \sqrt{|U|}$ for $i=1,2,\dots s$ and $1\le |Y_j|\le \sqrt{|V|}$ for $j=1,2,\dots t$.   
Then there exists a bipartite graph $G(X,Y)$ fulfilling the
following conditions:

\begin{itemize}
\item[(a)] there are no two edges $(x_1,y_1); (x_2,y_2)$ for which
$x_1,x_2\in X_i; \ y_1,y_2\in Y_j$ for some $i$ and $j$;
\item[(b)]  $G(X,Y)$ is a matching.
\end{itemize}
\end{lemma}

We have $mult(p_i, \varepsilon_i)\leq K$ and $K \leq \frac{\sqrt{rank(p_i)}}{2}$ (for $i=1,2$), because of the definitions of $K$ and $L$. Hence the number of distinct terms in the shortest representations are at least $2\sqrt{rank(p_i)}$, $i=1,2$. Assume now w.l.o.g. that
\begin{equation}\label{2}
0 \leq rank(p_1)-rank(p_2)\leq K \leq \frac{\sqrt{rank(p_2)}}{2}.
\end{equation}
Hence by the splitting process (used previously in Lemma 2.1. and Theorem 1.3.) there is a (multi)partitions of $p_2$ in the form $p_2=n_1y'_1+n_2y'_2+\dots n_ry'_r$ 
with $\sum_j n_j = rank(p_1)$. By (\ref{2}) this process exists, furthermore the maximal number of repetition of element in the representation of $p_2$ is at most $\sqrt{rank(p_2)}$. 

Now in the representation of $p_1=k_1 x_1+ \dots +k_i x_i+ \dots +k_t x_t$ identify $k_i x_i$ to a set $X_i$ with cardinality $k_i$ ($i=1,2, \dots ,t$) and in the representation $p_2=n_1 y'_1+n_2 y'_2+ \dots + n_r y'_r$ identify $n_j y'_j$ to a set $Y_j$ with cardinality $n_j$ ($j=1,2, \dots,r$). 

Finally let us use Lemma 4.2 which ensures that $(p_1,p_2)=\sum_{j=1}^{rank(p_1)}(x_{s_j},y'_{t_j})$, where the $(x_{s_j},y'_{t_j})$ pairs form a pairing according to the lemma. So  $(p_1,p_2)\in FS(A_1\times A_2)$.

\end{proof}

\section{Concluding remarks}

Recall that the general knapsack problem is known to be NP-complete and sounds as follows: for a given sequence $A=\{a_1,a_2,\dots, a_n\}\subset \N$ decide that the equation $s=\sum_{i=1}^n\varepsilon_ia_i; \ \varepsilon_i\in\{0,1\}$, $i=1,2,\dots, n$ is solvable or not in $\varepsilon_1,\varepsilon_2,\dots,\varepsilon_n$.

The density of a knapsack problem is defined as: $d:=\frac{n}{\log_2(\max a_i)}$. When $d<1$ then there is a possible encryption process. When $d>1$  there is no an effective 
approach to attack the knapsack problem. The main tool is the so called {\it }basis reduction method.


Now we will show a way to reduce this problem, decide whether a given point $(p_1,p_2)$ is an element of $FS(A_1\times A_2)$ or not, to a classical knapsack problem.



Let $(p_1,p_2)\in \N^2$ and assume that $1\leq p_1,p_2\leq M$. Let $B_1:=\{x_1<x_2<\dots x_k\}\subset A_1$ and $B_2:=\{y_1<y_2<\dots y_m\}\subset A_m$, where $k=\max\{T:x_1+x_2+\dots +x_T\leq M\}$ and $m=\max\{R:y_1+y_2+\dots +y_R\leq M\}$. 

Let now $Z:=\{z=Mx_i+y_j:1\leq i\leq k; \ 1\leq j\leq m\}\subset \N$. Observe that $(p_1,p_2)\in FS(A_1\times A_2)$ if and only if $Mp_1+p_2\in FS(Z))$.


\smallskip

\bigskip

\noindent{\bf Acknowledgement.} The second named author is supported by grant K-129335. The first and third named authors are supported by the European Union, co-ﬁnanced by the European Social
Fund (EFOP-3.6.3-VEKOP-16-2017-00002).

\bigskip


\begin{thebibliography}{99}

\bibitem[1]{1} Chen, YG ; Fang, JH ; Hegyv\'ari, N
Erd\H os-Birch type question in $\N^r$
J. of Number Theory 187 pp. 233-249. , 17 p. (2018)

\bibitem[2]{2} Hegyv\'ari, N. Complete sequences in $\N^2$ European Journal of Combinatorics (17) (1996)  741-749. 

\bibitem[3]{3} Hegyv\'ari, N Subset sums in $\N^2$ Combinatorics Probability and Computing (5) (1996) 393-402.

\bibitem[4]{4} Hennecart, F. On the restricted order of asymptotic bases of order two. (2005) Ramanujan J. 9 123–130.

\bibitem[5]{5} N.B. Nathanson, Additive Number Theory Springer-Verlag New York 1996

\bibitem[6]{6} A. Plagne, On threshold properties of k-SAT: An additive viewpoint, European Journal of Combinatorics 27 (2006) 1186-1198

\bibitem[7]{7} A. Rao, A.  Yehudayoff, Communication Complexity, available at

https://homes.cs.washington.edu/~anuprao/pubs/book.pdf

\bibitem[8]{8} A. Samorodnitsky, L. Trevisan, Gowers Uniformity, Influence of Variables, and PCPs, STOC'06., arXiv:math/0510264v1 [math.CO] 

\bibitem[9]{9} L. Trevisan, Earliest Connections of Additive Combinatorics and Computer Science, available in https://lucatrevisan.wordpress.com/2009/04/17/earliest-connections-of-additive-combinatorics-and-computer-science/


\end{thebibliography}
\end{document}